\newif\ifarxiv
\newcommand\cpmonly[1]{}
\newcommand\arxivonly[1]{#1}
\newcommand\citeappendix{\relax\xspace}
\newcommand\cpmonly[1]{#1}
\newcommand\arxivonly[1]{}
\newcommand\citeappendix{~\cite{mrmappendix}\xspace}
\newcommand\proofappendix{{\upshape (Proof in appendix\citeappendix.\!)}\xspace}
\newcommand\fullproofappendix{{\upshape (Extended proof in appendix\citeappendix.\!)}\xspace}
\newcommand\detailsappendix{{\upshape (Further details in appendix\citeappendix.\!)}\xspace}
\title{Most Recent Match Queries\penalty-1 in On-Line Suffix Trees}
\author{N.\,Jesper Larsson}
\institute{IT University of Copenhagen, Denmark,
\email{jesl@itu.dk}}
\newcommand\pred{{\mathit{pred}}}
\newcommand\pos{{\mathit{pos}}}
\newcommand\length{{\mathit{slen}}}
\newcommand\down{{\mathit{down}}}
\newcommand\suf{{\mathit{suf}}}
\newcommand\rsuf{{\mathit{rsuf}}}
\newcommand\repr{{\mathit{repr}}}
\newcommand\emptystring{{\epsilon}}
\newcommand\ST{{\mathcal{ST}}}
\newcommand\STi[1]{{\mathcal{ST}_{\kern-.222em #1}}}
\newcommand\LT{{\mathcal{LT}}}
\newcommand\TT{{\mathcal{T}}}
\newcommand\alphabet{{\mathrm{\Sigma}}}
\newcommand\reprupd{{\mathbf{repr\kern-.1em\textit{-}update}}}
\newcommand\mrmfind{{\mathbf{mrm\kern-.1em\textit{-}find}}}
\newcommand\ltdepth{{\mathit{depth}_\LT}}
\let\bibtla\textsc
\DeclareMathSymbol{\stabove}{\mathord}{symbols}{"3F}
\DeclareMathSymbol{\lttop}{\mathord}{symbols}{"60}
\begin{document}

\maketitle

\begin{abstract}
  A suffix tree is able to efficiently locate a pattern in an indexed string,
  but not in general the most recent copy of the pattern in an online stream,
  which is desirable in some applications. We study the most general version of
  the problem of locating a most recent match: supporting queries for arbitrary
  patterns, at each step of processing an online stream. We present
  augmentations to Ukkonen's suffix tree construction algorithm for
  optimal-time queries, maintaining indexing time within a logarithmic factor in the
  size of the indexed string. We show that the algorithm is applicable to
  sliding-window indexing, and sketch a possible optimization for use in the
  special case of Lempel-Ziv compression.
\end{abstract}

\arxivonly{\thispagestyle{plain}}

\section{Introduction}\label{sec-intro}

The \emph{suffix tree} is a well-known data structure which can be used for
effectively and efficiently capturing patterns of a string, with a variety of
applications~\cite{Apostolico85,gusfield,LarssonPhD}. Introduced by
Weiner~\cite{Weiner73}, it reached wider use with the construction algorithm of
McCreight~\cite{McR}. Ukkonen's algorithm~\cite{UkkoOnli} resembles
McCreight's, but has the advantage of being fully \emph{online}, an important
property in our work.  Farach~\cite{FarFOCS} introduced recursive suffix
tree construction, achieving the same asymptotic time bound as sorting the
characters of the string (an advantage for large alphabets), but at the cost
inherently off-line construction. The simpler \emph{suffix array} data
structure~\cite{Manber93,puglisi2007taxonomy} can replace a suffix tree in many
applications, but cannot generally provide the same time complexity, e.g., for
online applications.

Arguably the most basic capability of the suffix tree is to efficiently locate
a string position matching an arbitrary given pattern. In this work, we are
concerned with finding the \emph{most recent} (rightmost) position of the
match, which is not supported by standard suffix trees. A number of authors
have studied special cases of this problem, showing
applications in data compression and
surveillance~\cite{ALU_timestamp,ferragina1,CrochemoreRightmost}, but to our
knowledge, no efficient algorithm has previously been presented for the general
case. One of the keys to our result is recent advancement in online suffix tree
construction by Breslauer and Italiano~\cite{breslauer_sufext}.

We give algorithms for online support of locating the most recent longest match
of an arbitrary pattern $P$ in $O(|P|)$ time (by traversing $|P|$ nodes, one of
which identifies the most recent position). When a stream consisting of $N$
characters is subject to search, the data structure requires $O(N)$ space, and
maintaining the necessary position-updated properties takes at most $O(N\log
N)$ total indexing time. If only the last $W$ characters are subject to search
(a \emph{sliding window}), space can be reduced to $O(W)$ and time to $O(N\log
W)$.

In related research, Amir, Landau and Ukkonen~\cite{ALU_timestamp}
gave an $O(N\log N)$ time algorithm to support queries for the most
recent previous string matching a suffix of the (growing) indexed
string. The pattern to be located is thus not arbitrary, and the data
structure cannot support sliding window indexing.

A related problem is that of Lempel-Ziv factorization~\cite{LZ77}, where it is
desirable to find the most recent occurrence of each factor, in order to reduce
the number of bits necessary for subsequent encoding. For this special case,
Ferragina et al.~\cite{ferragina1} gave a suffix tree based linear-time
algorithm, but their algorithm is not online, and cannot index a sliding
window. Crochemore et al.~\cite{CrochemoreRightmost} gave an online algorithm
for the rightmost \emph{equal cost} problem, a further specialization for the
same application. In section~\ref{sec-lz}, we discuss a possible optimization
of our algorithm for the special case of Lempel-Ziv factorization.

\section{Definitions and Background}\label{sec-defs}

We study indexing a string $T=t_0\cdots t_{N-1}$ of length $|T|=N$, characters
$t_i\in \alphabet$ drawn from a given alphabet $\alphabet$. (We
consistently denote strings with uppercase letters, and characters with
lowercase letters.)  $T$ is made available as a \emph{stream}, whose
total length may not be known. The index is
maintained online, meaning that after seeing $i$ characters, it is
functional for queries on the string $t_0\cdots t_{i-1}$. Following the
majority of previous work, we assume that $|\alphabet|$ is a
constant.\footnote{It should be noted, however, that ours and previous
  algorithms can provide the same \emph{expected} time bounds for non-constant
  alphabets using hashing, and only a very small worst-case factor higher
  using efficient deterministic dictionary data structures.}

The data structure supports queries for the most recent longest match in
$T$ of arbitrary strings that we refer to as \emph{patterns}. More specifically, given a pattern $P=p_0\cdots p_{|P|-1}$, a
\emph{match} for a length-$M$ prefix of $P$ occurs in position $i$ iff $p_j =
t_{i+j}$ for all $0\leq j < M$. It is a \emph{longest} match iff $M$ is
maximum, and the \emph{most recent} longest match iff $i$ is the maximum
position of a longest match.

\subsection{Suffix Tree Construction and Representation}

By $\ST$, we denote the \emph{suffix tree}~\cite{Weiner73,McR,UkkoOnli,gusfield} 
over the string $T=t_0 \cdots t_{N-1}$. This section defines $\ST$, and
specifies our representation.

A string $S$ is a nonempty \emph{suffix} (of $T$, which is implied) iff
$S=t_i\cdots t_{N-1}$ for $0\le i < N$, and a nonempty \emph{substring} (of
$T$) iff $S=t_i\cdots t_{j}$ for $0\le i\le j < N$. By convention, the empty
string $\emptystring$ is both a suffix and a substring. Edges in $\ST$ are
directed, and each labeled with a string. Each point in the tree, either
coinciding with a node or located between two characters in an edge label,
corresponds to the string obtained by concatenating the edge labels on the path
to that point from the root.  $\ST$ represents, in this way, all substrings
of~$T$. We regard a point that coincides with a node as located at the end of
the node's edge from its parent, and can thus uniquely refer to the point on an
edge of any
represented string. An \emph{external} edge is an edge
whose endpoint is a leaf; other edges are \emph{internal}. The endpoint of each
external edge corresponds to a suffix of $T$, but some suffixes may be
represented inside the tree. Note that the point corresponding to an arbitrary
pattern can be located (or found non-existent) in time proportional to the
length of the pattern, by scanning characters left to right, matching edge
labels from the root down.

We do not require that $T$ ends with a unique
character, which would make each suffix correspond to some edge endpoint. Instead, we maintain points of implicit suffix nodes using the technique
of Breslauer and Italiano~\cite{breslauer_sufext}
(section~\ref{sec-case-suffix}).

Following Ukkonen, we augment the tree with an auxiliary node $\stabove$ above
the root, with a single downward edge to the root. We denote this edge $\lttop$
and label it with $\emptystring$. (Illustration in figure~\ref{fig-st}.)
Although the root of a tree is usually taken to be the topmost node, we
shall refer to the node below $\stabove$ (the root of the unaugmented tree) as
the root node of $\ST$.

Apart from $\lttop$, all edges are labeled with nonempty strings,
and the tree represents exactly the substrings of $T$ in the minimum number of
nodes. This implies that each node is either $\stabove$, the root, a leaf, or a
non-root node with at least two downward edges. Since the number of leaves is
at most $N$ (one for each suffix), the total number of nodes never exceeds
$2N+1$.

We generalize the definition to $\STi{i}$ over the string $T=t_0 \cdots t_{i-1}$,
where $\STi{N}=\ST$. In iteration~$i$, we execute Ukkonen's \emph{update}
algorithm~\cite{UkkoOnli} to reshape $\STi{i-1}$ into $\STi{i}$, without
looking ahead any further than $t_{i-1}$. When there is no risk of ambiguity,
we refer to the current suffix tree simply as $\ST$, implying that $N$
iterations have completed.

For downward tree navigation, we maintain $\down(e, a)=f$ for constant-time
access, where $e$ and $f$ are adjacent edges such that $e$'s endpoint coincides
with $f$'s
start node, and the first character in $f$'s label is $a$. Note that $a$ uniquely
identifies $f$ among its siblings. We define the string that \emph{marks} $f$
as the shortest string represented by $f$ (corresponds to the point
just after $a$). We also maintain $\pred(f)=e$ for constant-time upward
navigation.

For linear storage space in $N$, edge labels are represented indirectly,
as references into $T$. Among the many possibilities for representation, we choose the
following: For any edge $e$, we maintain $\pos(e)$, a position
in $T$ of the string corresponding to $e$'s endpoint, and for each
\emph{internal} edge $e$, we maintain $\length(e)$, the length of that same
string. I.e., $e$ is labeled with $t_i\cdots t_j$, where
$i=\pos(e)+\length(\pred(e))$ and $j=\pos(e)+\length(e)$. External edges need
no explicit $\length$ representation, since their endpoints
always correspond to suffixes of $T$, so $\length(e)$ for external $e$
would always be~$N-\pos(e)$. Note that $\pos(e)$ is not uniquely defined
for internal $e$. Algorithms given in the following sections
update $\pos$ values to allow efficiently finding the most recent occurrence
of a pattern.

Ukkonen's algorithm operates around the \emph{active point}, the point of
the longest suffix that also appears earlier in $T$. This is the deepest point
where $\ST$ may need updating in the next iteration, since longer suffixes are
located on external edges, whose representations do not change. In iteration
$i$, $t_i$ is to be incorporated in $\ST$. If $t_i$ is already present just
below the active point, the tree already contains all the suffixes ending at
$t_i$, and the active point simply moves down past $t_i$. Otherwise, a leaf is
added at the old active point, which is made into a new explicit node if
necessary, and we move to the point of the next shorter
suffix. To make this move efficient, typically jumping to a different branch of
the tree, the algorithm maintains a \emph{suffix link} from any node
corresponding to $aA$, for some character $a$ and string $A$, directly to the
node for~$A$.

\begin{figure}[t]
\vspace{-5mm}
  \begin{center}
\includegraphics[scale=.83]{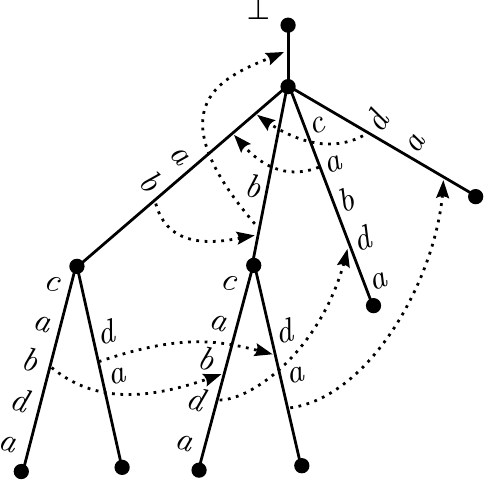}
\end{center}
\vspace{-4mm}
\caption{\label{fig-st}Suffix tree over the string $abcabda$. Dotted
  lines show edge-oriented suffix links.}

\end{figure}
 
We choose a representation where suffix links are edge-oriented, rather than
node-oriented as in McCreight's and Ukkonen's algorithms: 
for edges $e$ and
$f$,
 we let $\suf(e)=f$ iff $A$ marks $f$, and $aA$ and  is the shortest string represented by $e$ such
that $A$ marks an edge. (Illustrated in
figure~\ref{fig-st}.) Furthermore, we define $\rsuf$ to denote the
\emph{reverse suffix link}: $\rsuf(f, a)=e$. We leave $\suf(\lttop)$
undefined. Note that $aA$ is the string that
marks $e$, unless $e$ is a downward edge of the root with an edge label longer
than one character.  We have $\suf(e)=\lttop$ iff $e$'s endpoint corresponds to
a string of length one. 
This variant of suffix links facilitates the description of our \emph{most
  recent match} scheme, but also has practical impact on runtime behavior, due
to reduced branch lookup~\cite{edge_suflink}. The change it implies in
Ukkonen's algorithm is relatively straightforward, and has no impact on its
asymptotic time complexity. We omit the details in this work.

We refer to the path from the active point to $\lttop$, via suffix links and
(possibly) downward edges, as the \emph{active path}. All suffixes that also
appear as substrings elsewhere in $T$ are represented along this path. We refer
to those suffixes as \emph{active
  suffixes}. A key to the $O(N)$ time complexity of Ukkonen's algorithm is that
the active path is traversed only in the forward direction.

\section{Algorithm and Analysis}\label{sec-main}

To answer a most-recent longest-match query for a pattern $P'$, we first locate
the edge $e$ in $\ST$ that represents the longest prefix $P$ of $P'$.  For an
\emph{exact}-match query, we report failure unless $P=P'$. The time required
to locate $e$, by traversing edges from the root, while scanning edge
labels, is $O(|P|)$~\cite{Weiner73,McR,UkkoOnli,gusfield}. In this section, we
give suffix tree augmentations that allow computing the most recent match of
$P$ once its edge is located, while maintaining $O(|P|)$ query
time.

\paragraph{Separation of Cases}

The following identifies two cases in locating the most recent match of a
pattern string $P$, which we treat separately.

\begin{lemma}\label{lem-cases}
  Let $e$ be the edge that represents $P$, and let the string corresponding to
  $e$'s endpoint be $PA$,
    $|A|\ge 0$. Precisely one of the following holds:
  \begin{enumerate}
  \item\label{case-unique} The position of the most recent occurrence of $P$ is also
    the position of the most recent occurrence of $PA$.
  \item\label{case-suffix} There exists a suffix $PB$, $|B|\ge 0$ such that $|B|<|A|$.
  \end{enumerate}
\end{lemma}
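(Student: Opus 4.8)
The plan is to reduce the dichotomy to a comparison between the set of occurrences of $P$ and the set of occurrences of $PA$ in $T$. Using the paper's notion of a match, I write an occurrence of a length-$M$ string as a position $i$ with $t_i\cdots t_{i+M-1}$ equal to that string. I would then show that the occurrences of $P$ are exactly the occurrences of $PA$ together with a (possibly empty) collection of ``short'' occurrences that run off the end of $T$; case~\ref{case-unique} will correspond to this extra collection being empty and case~\ref{case-suffix} to it being nonempty. Since $PA$ is represented by $e$'s endpoint it occurs at least once, so ``the most recent occurrence of $PA$'' is well defined, and the case $|A|=0$ is immediate because then $P=PA$ and case~\ref{case-unique} holds trivially.

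The key structural step, which I expect to be the crux, is the observation that when $|A|\ge 1$ the point for $P$ is not a node: it is represented by $e$ (hence lies below $e$'s start node) but is distinct from $e$'s endpoint $PA$, and an edge has no interior nodes, so no branching occurs anywhere on $e$ between $P$ and $PA$. Consequently every occurrence of $P$ that is followed by at least $|A|$ further characters of $T$ must be followed by exactly $A$; otherwise two such occurrences would disagree at some character along $A$, forcing a branching node inside $e$. Every such occurrence is therefore an occurrence of $PA$. The only occurrences of $P$ that escape this are the positions $i$ with $i+|P|+|A|>N$, i.e.\ those for which the suffix $t_i\cdots t_{N-1}$ is shorter than $PA$; each such suffix has $P$ as a prefix and thus has the form $PB$ with $|B|=N-i-|P|<|A|$. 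This is precisely the extra collection above, and it is nonempty exactly when a suffix $PB$ with $|B|<|A|$ exists.

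Finally I would establish that the two cases are mutually exclusive and exhaustive by comparing positions. Every occurrence of $PA$ satisfies $i\le N-|P|-|A|$, whereas any suffix $PB$ with $|B|<|A|$ begins at position $N-|P|-|B|>N-|P|-|A|$, strictly later than every occurrence of $PA$. Hence, if such a suffix exists (case~\ref{case-suffix}) the most recent occurrence of $P$ is strictly later than the most recent occurrence of $PA$, so case~\ref{case-unique} fails; and if no such suffix exists the occurrence sets of $P$ and $PA$ coincide, so their most recent occurrences agree and case~\ref{case-unique} holds. The one point demanding care is that this phenomenon is specific to the present setting, where $T$ carries no unique terminator and a short suffix $PB$ may be represented inside an edge; in a terminator-augmented tree the extra collection would always be empty and case~\ref{case-suffix} could never arise. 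I must therefore make sure the argument appeals only to the absence of branching along $e$ and the finiteness of $T$, and nothing about suffixes being confined to leaves.
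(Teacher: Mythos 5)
Your proof is correct and takes essentially the same route as the paper's: both rest on the absence of a branching node between $P$ and $PA$ on $e$, so that every occurrence of $P$ either extends to an occurrence of $PA$ or runs off the end of $T$ as a suffix $PB$ with $|B|<|A|$, with exclusivity following because such a suffix is strictly more recent than any occurrence of $PA$. Your write-up is somewhat more explicit about the positional arithmetic, the well-definedness of the most recent occurrence of $PA$, and the degenerate case $|A|=0$, but the substance matches the paper's argument.
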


\proofappendix Sections \ref{sec-first-case-unique}--\ref{sec-last-case-unique} show how to
deal with
case~\ref{case-unique}, and section~\ref{sec-case-suffix} with case~\ref{case-suffix}.

\subsection{Naive Position Updating}\label{sec-naive}\label{sec-first-case-unique}

We begin with considering a naive method, by which we update $\pos(e)$ at any
time when the string corresponding to $e$'s endpoint reappears in the input.

Observe that any string that occurs later in $t_0\cdots t_{N-1}$ than in
$t_0\cdots t_{N-2}$ must be a suffix $t_j\cdots t_{N-1}$, for some $0\leq j\leq
N-1$. Hence, in each iteration, we need update $\pos(e)$ only if $e$'s endpoint
corresponds to an active suffix. This immediately suggests the following: after update iteration $i$, traverse the active
path, and for any edge $e$ whose endpoint corresponds to a suffix, \emph{pos-update $e$}, which we define as setting
$\pos(e)$ to $i-\length(e)$. Thereby, we maintain $\pos(e)$ as the most recent position for any non-suffix
represented by $e$, and whenever
case~\ref{case-unique} of lemma~\ref{lem-cases} holds, we obtain the most
recent position of $P$ directly from the $\pos$ value of its edge.

The problem with this naive method is that traversing the whole active path in
every iteration results in $\mathrm{\Omega}(N^2)$ worst case time.  The following
sections describe how
to reduce the number of pos-updates, and instead letting
the query operation inspect $|P|$ edges in order to determine the most recent
position.

\subsection{Position Update Strategy}\label{sec-updstrat}


To facilitate our description, we define the \emph{link tree} $\LT$ as the tree
of $\ST$ edges incurred by the suffix links: edges in $\ST$ are nodes in
$\LT$, and $f$ is the parent of $e$ in $\LT$ iff $\suf(e)=f$. The root of $\LT$
is $\lttop$. In order to keep the relationship between $\ST$ edges and
$\LT$ nodes clear, we use the letters $e$, $f$, $g$, and $h$ to denote them in both
contexts.

We define $\ltdepth(e)$ as the depth of $e$ in $\LT$. Because of the
correspondance between $\LT$ nodes and $\ST$ edges, we have
$\ltdepth(e)=\length(\pred(e))$.

By the current \emph{update edge} in iteration $i$, we denote the edge $e$ such
that $\ltdepth(e)$ is maximum among the edges, if any, that would be updated by the
naive update strategy (section~\ref{sec-naive}) in that iteration: the
maximum-$\ltdepth$ internal edge whose endpoint corresponds to an active
suffix. Section~\ref{sec-bresl-updpoint} describes how the update edge can be
located in constant time.

Our update strategy includes pos-updating \emph{only} the update
edge, leaving $\pos$ values corresponding to shorter active suffixes
unchanged. When no update edge exists, we pos-update nothing.
We introduce an
additional value $\repr(e)$ for each internal edge $e$, for which we uphold the
following property:
\begin{property}\label{prop-repr}
  For every node $g$ in the suffix link tree, let $e$ be the most recently pos-updated node in the
  subtree rooted at $g$. Then an ancestor $a$ of $g$ exists such that
  $\repr(a)=e$.
\end{property}

By convention, a tree node is both an ancestor and a descendent of itself.
For new $\LT$ nodes $e$ (without descendants), we set $\repr(e)$ to $\lttop$.
We proceed with first the algorithm that exploits
property~\ref{prop-repr}, then the algorithm to
maintain it.

\subsection{Most Recent Match Algorithm}\label{sec-mrmfind}


Algorithm $\mrmfind(e)$ scans the $\LT$ path from node $e$ to the root in
search for any node $g$ such that $f=\repr(g)$ is a descendent of $e$. For each
such $f$, it obtains the position $q=\pos(f)+\ltdepth(f)-\ltdepth(e)$, and the
value returned from the algorithm is the maximum among the $q$.

\medskip

\noindent $\mrmfind(e)$:\vspace{-1.5ex}
\begin{enumerate}
\item Let $p=\pos(e)$, and $g=e$.
\item\label{st-mrmfind-loop} If $g$ is $\lttop$, we are done, and terminate returning
  the value $p$.
\item If $\repr(g)=\lttop$ (i.e., it has not been set), go directly to step~\ref{st-mrmfind-next}.
\item\label{st-mrmfind-isancestor}  Let $f=\repr(g)$. If $e$ is not an ancestor of $f$ in $\LT$, go directly
  to step~\ref{st-mrmfind-next}.
\item Let $q=\pos(f)+\ltdepth(f)-\ltdepth(e)$. If $q>p$, set $p$ equal to
  $q$.
\item\label{st-mrmfind-next} Set $g$ to $\suf(g)$, and repeat from step~\ref{st-mrmfind-loop}.
\end{enumerate}

The following lemma establishes that when property~\ref{prop-repr} is
maintained, the most recent occurrence of the string corresponding to $e$'s
endpoint is among the positions considered by $\mrmfind(e)$.

\begin{lemma}\label{lem-find-corr}
  For an internal edge $e$, let $A$ be the string corresponding to $e$'s
  endpoint, and $t_{i-|A|}\cdots t_{i-1}$ the most recent occurrence of $A$ in
  $T$. Then $e$ has a descendent $f$ in $\LT$ whose endpoint corresponds to
  $BA$ for some string $B$, and
  $\pos(f)=i-|B|-|A|$. \proofappendix
\end{lemma}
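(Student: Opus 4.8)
The plan is to identify the descendant $f$ explicitly as the \emph{update edge} of iteration $i$ and to track how its $\pos$ value is set. Write $A=t_{i-|A|}\cdots t_{i-1}$ for the most recent occurrence. First I would show that $A$ is an \emph{active suffix} of $\STi{i}$: since $e$ is internal, its endpoint $A$ labels a branching node and hence occurs at least twice in $T$; as $i-1$ is the \emph{latest} end position of $A$, every other occurrence ends strictly before $i-1$ and therefore lies inside $t_0\cdots t_{i-1}$. Thus $A$ is a suffix of $t_0\cdots t_{i-1}$ that also appears earlier, which is exactly the definition of an active suffix in iteration $i$. In particular the active point of iteration $i$ sits at depth at least $|A|$, at or below the node $A$.

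Next I would let $BA$ be the deepest explicit active-suffix node of iteration $i$ and $f$ its parent edge, so that $f$ is the maximum-$\ltdepth$ internal edge with an active-suffix endpoint, i.e.\ the update edge defined above. Because $BA$ and $A$ both end at position $i-1$ and $|BA|\ge|A|$, the string $A$ is a suffix of $BA$, which justifies writing the endpoint of $f$ as $BA$. The active suffixes all lie on the active path, and the edges carrying active-suffix endpoints are linked by $\suf$ into a single ancestor chain in $\LT$ ordered by depth; since $e$ (endpoint $A$) and $f$ (endpoint $BA$) both belong to this chain with $|BA|\ge|A|$, the edge $e$ is an $\LT$-ancestor of $f$, i.e.\ $f$ is a descendant of $e$ in $\LT$, as required.

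Finally I would compute $\pos(f)$. Pos-updating the update edge in iteration $i$ sets $\pos(f)=i-\length(f)=i-|BA|=i-|B|-|A|$, recording the occurrence of $BA$ that ends at $i-1$. It remains to see that this value is final: $\pos(f)$ could change only in a later iteration $j>i$ in which $f$ is again the update edge, which would require $BA$ to end at some position $j-1>i-1$; but an occurrence of $BA$ ending after $i-1$ contains an occurrence of its suffix $A$ ending after $i-1$, contradicting the maximality of $i-1$. Hence $\pos(f)=i-|B|-|A|$ still holds at query time, completing the argument.

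I expect the main obstacle to be the middle step: arguing rigorously that the deepest active-suffix node's parent edge is exactly the update edge, and that $e$ and $f$ lie on a common $\LT$-chain. The care needed is in translating between the node-oriented picture of active suffixes (closed under removing the leading character) and the edge-oriented suffix links defining $\LT$, where $\suf$ relates \emph{marking} strings rather than endpoints. I would handle this by tracking $\ltdepth$ along the active path, which decreases by exactly one at each $\suf$ step, to certify that iterating $\suf$ from $f$ reaches $e$ before reaching $\lttop$.
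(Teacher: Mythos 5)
Your first and third paragraphs are sound and match the paper's route: you identify $f$ as the update edge of iteration $i$, argue its endpoint is $BA$ with $A$ a suffix (via $A$ being an explicit active-suffix node, so the deepest one is at least as deep), and compute $\pos(f)=i-|B|-|A|$. Your finality argument (no later iteration can overwrite $\pos(f)$, since an occurrence of $BA$ ending after $i-1$ would contain one of $A$) is a small genuine addition that the paper leaves implicit.

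The gap is in the middle step, exactly where you predicted trouble, and your proposed fix does not work. Your certificate rests on the invariant that $\ltdepth$ decreases by exactly one at each $\suf$ step, and this is false: $\ltdepth(g)=\length(\pred(g))$ is the string depth of $g$'s \emph{start} node, which is not monotone along suffix-link chains. Concretely, suppose $x$ is a branching node (both $xw$ and $xy$ occur), $xy$ is always followed by $z$, $xyz$ is branching, $v$ occurs only immediately before $xyz$, and $vxyz$ is branching --- e.g.\ $T=xwvxyzpvxyzqvxyz$, ending with the most recent occurrence of $A=xyz$. Then $e$ runs from node $x$ to node $xyz$ with $\ltdepth(e)=1$, while the update edge $f$ runs from the root to node $vxyz$ with $\ltdepth(f)=0$: along the chain from $f$ to $e$ the value $\ltdepth$ \emph{increases}, and at root edges with long labels a $\suf$ step can leave $\ltdepth$ at $0$. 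Worse, since there is no node at $vx$ or $vxy$, the marking string of $f$ is just $v$, so the chain obtained by repeatedly taking tails of \emph{marking} strings from $f$ runs through the root's $x$-edge and misses $e$ entirely; tracking marking strings with a per-step depth decrement therefore certifies the wrong chain. The quantity that does drop by exactly one per step between $f$ and $e$ is the \emph{endpoint} depth, and to know that each $\suf$ step from an internal edge with endpoint $B_jA$ lands on the edge whose endpoint is exactly $B_{j+1}A$ you need the closure property that the tail of a branching node's string is again a branching node --- a fact your writeup never invokes. The paper closes this hole differently: writing $A=A_La A_R$ with $A_La$ marking $e$, the absence of branching between $A_La$ and $A$ means $A_La$ (hence $BA_La$) is always followed by $A_R$, so $BA_La$ and $BA$ are represented by the same edge $f$, and by the definition of $\LT$ the edge representing $BA_La$ is a descendant of $e$. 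Some such argument relating $e$'s \emph{marking} string to $f$ is indispensable; as written, your assertion that the active-suffix edges form ``a single ancestor chain in $\LT$ ordered by depth'' is precisely the content of the lemma, assumed rather than proved.
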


Since $\mrmfind(e)$ returns the maximum among the considered positions, this
establishes its validity for finding the most recent position of the string
corresponding to $e$'s endpoint. Under case~\ref{case-unique} of
lemma~\ref{lem-cases}, this is the most recent position of \emph{any} string
represented by $e$. Hence, given that $e$ represents pattern $P$, $\mrmfind(e)$
produces the most recent position of $P$ in this case.


\begin{lemma}\label{lem-find-time}
  Execution time of $\mrmfind(e)$, where $e$ represents a string $P$ can be
  bounded by $O(|P|)$. \proofappendix
\end{lemma}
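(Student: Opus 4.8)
The plan is to bound the running time as the number of loop iterations times the worst-case cost of a single iteration. Each pass through the loop replaces $g$ by $\suf(g)$, which by the definition of $\LT$ is precisely the step from $g$ to its parent in $\LT$; the loop begins at $g=e$ and halts when $g=\lttop$, the root of $\LT$. Hence the number of iterations is at most the number of nodes on the $\LT$-path from $e$ to $\lttop$, namely $\ltdepth(e)+1$. Using the identity $\ltdepth(e)=\length(\pred(e))$ from section~\ref{sec-updstrat}, the marking string of $e$ (the shortest string represented by $e$) has length $\length(\pred(e))+1=\ltdepth(e)+1$. Since $e$ represents $P$, the string $P$ is among the strings represented by $e$, so the marking string is no longer than $P$, giving $\ltdepth(e)+1\le|P|$. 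Thus the loop runs at most $|P|$ times.

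It remains to show that each iteration costs $O(1)$. Steps~1,~2,~3,~5, and~6 only read stored fields, test equality against $\lttop$, evaluate the arithmetic expression for $q$, compare it against $p$, and follow one suffix link, all in constant time. The only step that is not evidently constant-time is step~\ref{st-mrmfind-isancestor}, the test of whether $e$ is an ancestor of $f=\repr(g)$ in $\LT$. I expect this to be the main obstacle, because $\LT$ is not static: Ukkonen's update inserts new $\LT$ nodes and reshapes edges as indexing proceeds, so a precomputed static ancestor labeling is unavailable.

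I would resolve the ancestor test by maintaining, for every $\LT$ node, an opening and a closing Euler-tour marker stored in an order-maintenance structure that answers order comparisons in $O(1)$ worst-case time and admits insertions in $O(1)$ amortized time. Given such markers, $e$ is an ancestor of $f$ exactly when the opening marker of $e$ precedes that of $f$ and the opening marker of $f$ precedes the closing marker of $e$ --- two comparisons, hence $O(1)$ per test. The cost of inserting and relabeling markers is incurred while building the index, not while answering a query, so it does not enter the per-query bound. Multiplying the at-most-$|P|$ iterations by the $O(1)$ per-iteration cost yields the claimed $O(|P|)$ bound.
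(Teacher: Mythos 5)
Your proposal is correct, and its core counting argument is the same as the paper's: the paper lets $Q$ be the string that marks $e$, observes $|Q|\le|P|$, and bounds the suffix-link walk from $e$ to the root by $|Q|$ steps, which is exactly your $\ltdepth(e)+1\le|P|$ computation in slightly different bookkeeping. Where you genuinely diverge is in justifying the constant-time ancestor test of step~4: the paper simply cites the dynamic lowest-common-ancestor structure of Cole and Hariharan~\cite{cole2005dynamic} (which it needs anyway for the LCA step in $\reprupd$), whereas you construct the primitive explicitly via Euler-tour open/close markers in an order-maintenance list. Your construction is sound and essentially the Dietz--Sleator approach~\cite{dietz1987two} --- notably the one the paper itself falls back on in the sliding-window corollary, where it deletes from ``the data structure for ancestor queries used in $\mrmfind$'' --- and you correctly note that the amortized insertion cost is charged to indexing, not to the query, so the $O(|P|)$ worst-case query bound is unaffected. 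What your route buys is self-containedness and avoiding the full power of dynamic LCA when only ancestor tests are needed here; what it costs is one verification you leave implicit: the Euler-tour scheme requires that $\LT$ evolve only by node insertions realizable as $O(1)$ marker insertions (a new node placed as a leaf, or spliced above a single existing child), never by re-parenting existing subtrees. With edge-oriented suffix links this does hold --- when an $\ST$ edge is split, the top piece retains the old node's identity, parent, and $\LT$ children, and the bottom piece enters $\LT$ with no pre-existing descendants; new nodes within one Ukkonen iteration may need to be spliced in an order compatible with their parents' existence --- but this is a property of the construction, not of the order-maintenance structure, and a sentence establishing it would close the one gap between your argument and a complete proof. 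The paper hides the same issue behind its citation, so this is a matter of rigor parity rather than an error.
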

%

\subsection{Maintaining Property \ref{prop-repr}}\label{sec-reprupd}\label{sec-last-case-unique}


Since $\lttop$ is an ancestor of all nodes in $\LT$, we can trivially uphold
property~\ref{prop-repr} in relation to any updated node $e$ simply by setting
$\repr(\lttop)=e$. But since this ruins the property in relation to other nodes
(unless the \emph{previous} value of $\repr(\lttop)$ was an
ancestor of $e$) we must recursively push the overwritten $\repr$ value down
$\LT$ to the root of the subtree containing those nodes.

More specifically, when $\repr(r)$ is set to $e$, for some $\LT$
nodes $r$ and $e$, let
$f$ be the previous value of $\repr(r)$. Then find $h$, the minimum-depth node that
is an ancestor of $f$ but not of $e$, and recursively update $\repr(h)$ to
$f$. To find $h$, we first locate $g$, the lowest common ancestor of
$e$ and $f$. Figure~\ref{fig-push} shows the five different ways in which $e$,
$f$, $g$, and $h$ can be located in relation to one another. In case~a, $h$
lies just under the path between $e$ and the root,
implying that we need to set $\repr(h)$ to $f$. We find $h$ via
a reverse suffix link from $g$. Cases~b (where $f=h$) and~c ($g=e$) are merely
special cases of the situation in~a, and are handled in exactly the
same way. In case~d ($g=f$), the overwritten $\repr$ value
points to an ancestor of $e$, and the process can terminate immediately. Case~e
is the special case of~d where the old and new $\repr$
values are the same.

The following details the procedure. It is invoked as
$\reprupd(e, \lttop)$ in order to reestablish property~\ref{prop-repr}, where
$e$ is the current update edge.

\medskip

\noindent $\reprupd(e, r)$:\vspace{-1.5ex}
\begin{enumerate}
\item Let $f$ be the old value of $\repr(r)$, and set its new value to $e$.
\item If $f=\lttop$ (i.e.\@ $\repr(r)$ has not been previously set), then terminate.
\item\label{st-lca} Let $g$ be the lowest common ancestor of $e$ and $f$.
\item If $g=f$, terminate.
\item Let $h=\rsuf(g, t_j)$, where
  $j=\pos(f)+\ltdepth(f)-\ltdepth(g)-1$.
\item Recursively invoke $\reprupd(f,h)$.
\end{enumerate}



\begin{figure}[t]
\vspace{-.5cm}
  \begin{center}
\includegraphics[scale=.83]{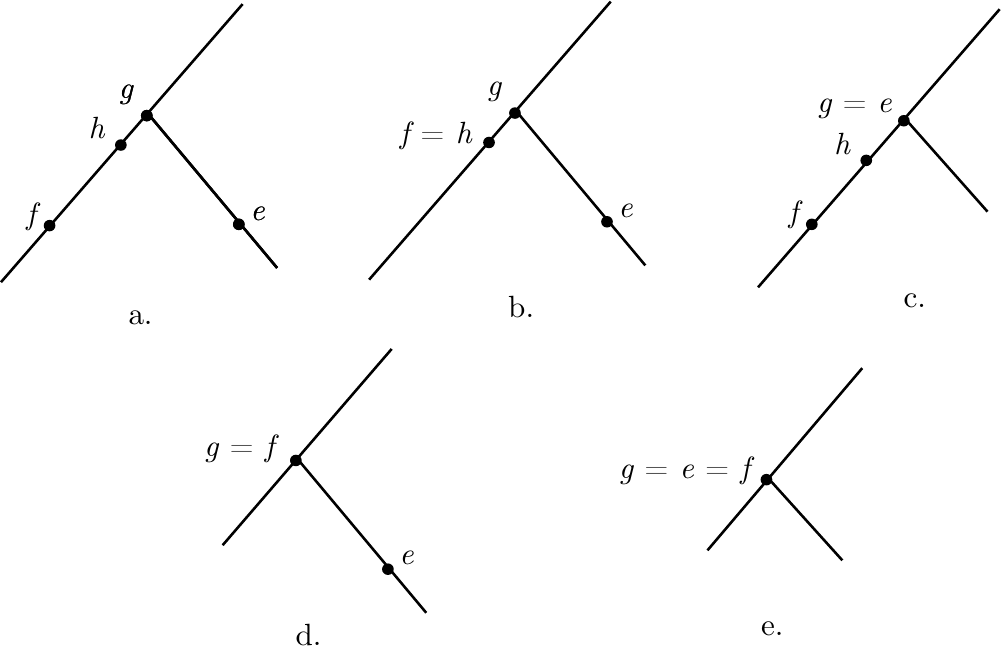}
\end{center}
\vspace{-.3cm}
\caption{\label{fig-push}Cases in $\reprupd$: a, b, and c
  progress down the tree; d and e terminate.}
\end{figure}

Correctness of $\reprupd$ in maintaining
property~\ref{prop-repr}, is established by the preceding discussion.
We now turn to bounding the total
number of recursive calls.

\begin{lemma}\label{lem-balleaves}
  Given a sequence $V=e_1,\ldots, e_N$ of nodes to be updated in a tree $\TT$ with $M$
  nodes, there exists a tree $\TT'$ with at most $2N$ nodes, such that the
  depths of any two leaves in $\TT'$ differ by at most one, and a sequence of
  $\TT'$ nodes
  $V'=e'_1,\ldots, e'_N$, such that invoking $\reprupd(e', \mathit{root}(\TT'))$ for each
  $e'\in V'$ results in at least as many recursive $\reprupd$ calls as invoking
  $\reprupd(e, \mathit{root}(T))$ for each $e\in V$.
\end{lemma}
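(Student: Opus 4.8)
The plan is to prove the lemma constructively, producing $\TT'$ and $V'$ from $(\TT,V)$ by two structural stages, each of which leaves the total number of recursive $\reprupd$ calls unchanged or larger, and arranging that the final tree has all leaf depths within one of each other and at most $2N$ nodes. The first stage will shrink the tree to at most $2N$ nodes while preserving the call count exactly; the second, and harder, stage will rebalance its branching shape while proving that no calls are lost.

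For the first stage I would establish that the call count is insensitive to the lengths of degree-two chains, so that only the branching topology among the updated nodes matters. Inspecting the procedure, each recursive invocation passes from the current root $r$ to $h=\rsuf(g,t_j)$, a child of the lowest common ancestor $g$ computed in step~\ref{st-lca} of the inserted value and the displaced value $f$; hence every pushed value comes to rest at a child of a branching node, and the LCA test in step~\ref{st-lca} together with the ancestor test jumps over any interior chain of degree-two, never-updated nodes. Contracting or subdividing such a chain therefore leaves every LCA, every ancestor test, and thus the entire sequence of calls unchanged. Using this invariance I would prune $\TT$ to the minimal subtree spanning the root together with the $N$ nodes of $V$, i.e.\ the closure of $V\cup\{\mathit{root}\}$ under pairwise LCA. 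This closure is a tree with at most $N$ leaves in which every internal node has at least two children, so it has at most $2N-1$ nodes, and by the chain-invariance it produces exactly the same number of recursive calls as $\TT$.

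The second stage rebalances the branching shape of this pruned tree into a near-complete one, so that leaf depths differ by at most one while the node total stays at most $2N-1$ (a binary tree on $L\le N$ leaves has $2L-1\le 2N-1$ nodes, and the near-complete shape already gives leaf depths within one without inserting long chains). I would carry this out by a local exchange argument: whenever a deepest and a shallowest leaf differ in depth by more than one, regraft the shallow subtree beneath the deeper branch (or symmetrically), iterating until the depths equalize. The crucial claim is that no such exchange decreases the summed call count. I expect \emph{this} to be the main obstacle, because the cost of $\reprupd$ is amortized: a single update can cascade as deep as the tree, yet such a deep cascade populates many deep $\repr$ slots and thereby shortens all subsequent cascades, so the comparison cannot be made call-for-call. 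I would therefore prove it not pointwise but through a potential function on the configuration of $\repr$ values (exploiting that a slot, once set, is never reset to $\lttop$, so occupancy is monotone and bounded by $2N$), chosen so that the total number of recursive calls is recovered from the cumulative change of the potential, and then show that the regrafting can only enlarge that cumulative change. The delicate points will be designing a potential that is robust to regrafting—so that the net change summed over the whole sequence $V'$ dominates that for $V$—and dispatching the boundary cases: the root's $\repr$ value, leaves that become internal under an exchange, and newly created nodes initialized to $\lttop$ as required by Property~\ref{prop-repr}. The chain-invariance secured in the first stage is exactly what confines all of this bookkeeping to the $O(N)$ branching nodes, keeping the argument finite and local.
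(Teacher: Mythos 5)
Your first stage is sound, and in fact more careful than the corresponding step in the paper's own proof. The observation that every displaced value comes to rest at the root or at a child of an LCA of two stored values, that all stored values lie in $V$, and that consequently contracting never-updated degree-two chains and discarding subtrees containing no node of $V$ leaves the entire call sequence unchanged, is correct; pruning to the LCA closure of $V\cup\{\mathit{root}\}$ then gives a tree of roughly $2N$ nodes with the call count preserved exactly. (The paper takes a different first step: it replaces each updated node by a leaf below it, adding at most $N$ nodes, so that all updates and all stored values are leaves; this is what its balancing argument later relies on.)

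The genuine gap is your second stage, which is exactly where the whole difficulty of the lemma sits. Your plan is to rebalance by local regrafting and to show that no exchange decreases the total call count by means of a potential function on the configuration of $\repr$ values; but no such potential is exhibited, no monotonicity-under-regrafting claim is proved, and you yourself note that the comparison ``cannot be made call-for-call.'' As it stands, the central claim of the lemma is deferred, not established. Moreover, your plan tries to carry (essentially) the same update sequence through each exchange, which is a much stronger statement than the lemma requires: the lemma only asserts the \emph{existence} of some sequence $V'$ on the balanced tree. The paper exploits precisely this freedom. Having made all updates and all stored values leaves, it observes that a recursive call progresses at a node $g$ iff the inserted value $e$ and the displaced value $f$ lie in different subtrees of $g$, counts the opportunities for this as a function of how the leaves are distributed among the subtrees of each internal node, notes that this count is maximized when the leaf counts are as even as possible, and then, after evening them out (which yields a tree with leaf depths differing by at most one), chooses a \emph{new} sequence $V'$ realizing at least as many recursive calls. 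No per-exchange invariant and no potential function is needed. To repair your proposal you would either have to actually construct the potential with the properties you list --- which is the hard open part of your argument --- or switch to the paper's strategy of re-choosing the update sequence after balancing.
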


\begin{proof}[sketch]
  $V$ can be replaced by a sequence $V'$ containing only leaves, and $\TT$ by
  a balanced binary tree $\TT'$ with at most $2N$ nodes, without
  increasing the number of recursive $\reprupd$
  calls. \fullproofappendix \qed
\end{proof}



\subsection{Maintaining Implicit Suffix Nodes and Main Result}\label{sec-case-suffix}

To conclude our treatment, we disucss handling case~\ref{case-suffix} in
lemma~\ref{lem-cases}: finding the most recent match of a pattern that
corresponds to a point in $\ST$
with an implicitly
represented suffix on the same edge. Once such an implicit suffix node is
identified, the most recent pattern position is trivially obtained (the
position of the corresponding suffix). Furthermore, identifying implicit suffix nodes has a
known solution: Breslauer and Italiano~\cite{breslauer_sufext} describe
how Ukkonen's algorithm can be augmented with a stack of \emph{band
  trees}, whose nodes map top $\ST$ edges, by which implicit suffix nodes are
maintained for amortized constant-time access, under linear-time suffix
tree online construction. \detailsappendix

\label{sec-bresl-updpoint}The band stack scheme has one additional use in our
scheme: in each $\ST$ update operation, Breslauer and Italiano's algorithm pops
a number of bands from the stack, and keeps the node that is the endpoint of
the last popped edge. This node is the first explicit node on the active path,
and, equivalently, the edge is the maximum-$\ltdepth$ internal edge whose endpoint
corresponds an active suffix. This coincides with our definition of the
\emph{update edge} in section~\ref{sec-updstrat}. Thus, we obtain the current
update edge in constant time.

\begin{theorem}\label{th-allnlogn}
  A suffix tree with support for locating, in an input stream, the most recent
  longest match of an arbitrary pattern $P$ in $O(|P|)$ time, can be
  constructed online in time $O(N\log N)$ using $O(N)$ space, where $N$ is the
  current number of processed characters.
\end{theorem}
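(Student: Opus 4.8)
The plan is to assemble the theorem from the component results already established, treating it as a synthesis of the query-time, space, and indexing-time bounds, with the main work being the amortized analysis of the $\reprupd$ overhead. First I would argue correctness of queries. Given a pattern $P'$, we locate in $O(|P'|)$ time the edge $e$ representing its longest matched prefix $P$, as noted in the standard suffix-tree scanning bound. Lemma~\ref{lem-cases} splits into two exhaustive cases. In case~\ref{case-unique}, the most recent position of $P$ coincides with that of the string $PA$ at $e$'s endpoint, so by Lemma~\ref{lem-find-corr} (validity of the $\repr$-based search) together with the maximization performed by $\mrmfind$, the call $\mrmfind(e)$ returns exactly the most recent position of $P$; Lemma~\ref{lem-find-time} gives this in $O(|P|)=O(|P'|)$ time. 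In case~\ref{case-suffix}, an implicit suffix node lies on $e$'s edge below the point of $P$, and the Breslauer--Italiano band-stack machinery (section~\ref{sec-case-suffix}) identifies it, whence the most recent position is read off directly as the suffix position. Combining the two cases yields the claimed $O(|P|)$ query time.

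Next I would address space. The suffix tree over $N$ characters has at most $2N+1$ nodes and the same asymptotic number of edges, with edge labels stored indirectly as $\pos$/$\length$ references into $T$, so the base structure is $O(N)$. The added fields $\repr$, $\suf$, $\rsuf$, $\down$, and $\pred$ contribute $O(1)$ per edge, and the band-stack structure of Breslauer and Italiano is likewise $O(N)$. Hence total space is $O(N)$.

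The heart of the theorem is the $O(N\log N)$ indexing time. I would account for the per-iteration cost in three parts. First, each Ukkonen \emph{update} iteration, augmented with edge-oriented suffix links and the band-stack maintenance, runs in amortized constant time, contributing $O(N)$ total; the band-stack popping also supplies the current update edge in constant time, as established in section~\ref{sec-bresl-updpoint}. Second, we perform at most one pos-update per iteration (only the update edge), costing $O(N)$ overall. Third, and decisively, each pos-update triggers a call $\reprupd(e,\lttop)$ whose cost is the number of recursive calls it makes. The obstacle is bounding the \emph{total} number of such recursive calls across all $N$ iterations.

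To surmount this, I would invoke Lemma~\ref{lem-balleaves}: the worst-case recursion count for any update sequence on the link tree $\LT$ is no smaller than that for a suitable sequence of leaf updates on a balanced binary tree $\TT'$ with $O(N)$ nodes whose leaves have near-uniform depth. On such a balanced tree every root-to-node path has length $O(\log N)$, so each $\reprupd$ invocation descends through $O(\log N)$ recursive calls (each step in $\reprupd$ strictly increases the depth of the argument $r$, moving from $g$ to a child $h$ via a reverse suffix link, and terminates upon reaching the appropriate subtree). Thus the $N$ pos-updates incur $O(N\log N)$ recursive calls in total, and since each call does $O(1)$ work—computing a lowest common ancestor and a reverse suffix link, both constant-time in our representation—the amortized $\reprupd$ cost is $O(N\log N)$. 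Summing the three contributions gives $O(N)+O(N)+O(N\log N)=O(N\log N)$ total indexing time, completing the proof. The subtle point deserving care is verifying that the reduction of Lemma~\ref{lem-balleaves} genuinely upper-bounds the original recursion count, and that depth in $\LT$ is the correct progress measure guaranteeing termination within $O(\log N)$ steps on the balanced tree; these are exactly what the cited lemma packages for us.
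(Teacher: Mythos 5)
Your proposal is correct and follows essentially the same route as the paper's own proof: query correctness and $O(|P|)$ time via lemmas~\ref{lem-find-corr} and~\ref{lem-find-time} for case~\ref{case-unique}, the Breslauer--Italiano band stack for case~\ref{case-suffix} and for locating the update edge, and the $O(N\log N)$ maintenance bound obtained by combining lemma~\ref{lem-balleaves} (balanced-tree reduction, height $O(\log N)$) with constant-time dynamic LCA queries. Your three-part accounting of indexing time is just a slightly more explicit packaging of the same argument.
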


\begin{proof}[sketch]
  By lemma~\ref{lem-balleaves}, the number
  of $\reprupd$ calls is $O(N\log N)$, each of which takes
  constant time, using a data structure for constant-time lowest
  common ancestor queries~\cite{cole2005dynamic}. This bounds the
  time for maintenance under case~\ref{case-unique} in
  lemma~\ref{lem-cases} to $O(N\log N)$. In case~\ref{case-suffix}, we
  achieve~$O(N)$ time by the data structure of Breslauer and
  Italiano. \fullproofappendix \qed
\end{proof}

We assert that an adversarial input exists that results
in $\mathrm{\Omega}(N\log N)$ recursive calls, and hence this worst-case bound
is tight. \detailsappendix

\section{Sliding Window}\label{sec-slide}

A major advantage of online suffix tree construction is its applicability for a
\emph{sliding window}: indexing only the most recent part  (usually a fixed
length) of the input stream~\cite{FiGr,SufComp}. We note that
our augmentations of Ukkonen's algorithm can efficiently support most recent
match queries in a sliding window of size~$W$:


\begin{corollary}\label{cor-slide}
  A suffix tree with support for locating, among the most recent $W$ characters
  of an input stream, the most recent
  longest match of an arbitrary pattern $P$ in $O(|P|)$ time, can be
  constructed online in time $O(N\log W)$ using $O(W)$ space, where $N$ is the
  current number of processed characters.
\end{corollary}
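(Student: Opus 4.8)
The plan is to reduce the sliding-window case to the growing-stream result of Theorem~\ref{th-allnlogn} by observing that all three resources---query time, space, and indexing time---degrade gracefully when we cap the indexed content at $W$ characters rather than $N$. First I would invoke the standard sliding-window suffix tree machinery: to index only $t_{N-W}\cdots t_{N-1}$, each insertion of a new character $t_i$ is accompanied by a deletion of the oldest character $t_{i-W}$, which removes the corresponding leaf and, when it has only one remaining sibling, contracts its parent. This keeps the tree size bounded by the number of suffixes in the window, so the node count never exceeds $2W+1$ and space is $O(W)$. The query-time bound $O(|P|)$ transfers verbatim, since locating the edge $e$ representing $P$ and running $\mrmfind(e)$ depends only on $|P|$ and not on how many characters have been seen.

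The core of the argument is the indexing-time bound $O(N\log W)$. Here I would reuse Lemma~\ref{lem-balleaves}, but applied to a tree whose node count is bounded by $O(W)$ at every moment rather than $O(N)$. The logarithmic factor in Theorem~\ref{th-allnlogn} arises because a single $\reprupd$ invocation can trigger a chain of recursive calls whose length is bounded by the depth of the (rebalanced) link tree, which for an $N$-node tree is $O(\log N)$; when the link tree has only $O(W)$ nodes, the same rebalancing argument caps each invocation's recursion depth at $O(\log W)$. Since Ukkonen's algorithm performs $O(N)$ amortized work over the whole stream and we invoke $\reprupd$ a constant number of times per iteration, the total number of recursive calls becomes $O(N\log W)$, each still costing constant time via the constant-time lowest-common-ancestor structure of~\cite{cole2005dynamic}. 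For case~\ref{case-suffix} of Lemma~\ref{lem-cases}, the band-stack scheme of Breslauer and Italiano again provides amortized constant time per iteration, contributing only $O(N)$.

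The hard part will be verifying that leaf deletion integrates cleanly with the $\repr$ and $\pos$ maintenance. In the growing case, edges are only created and $\reprupd$ pushes overwritten $\repr$ values downward; deletion introduces the reverse operation of removing an $\LT$ node, and I must confirm that Property~\ref{prop-repr} is either preserved or cheaply repaired when an edge vanishes. I expect this to work out because a deleted leaf edge corresponds to an expired suffix whose $\pos$ value is, by construction, older than the window and thus never the answer to a most-recent-match query; any $\repr$ pointer aimed at a deleted node can be lazily invalidated and treated as $\lttop$ in $\mrmfind$ without affecting correctness, since $\mrmfind$ already guards against $\repr(g)$ not being a descendant of $e$. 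I would make this invalidation argument precise, then assemble the three bounds---$O(|P|)$ query, $O(W)$ space, $O(N\log W)$ indexing---to obtain the corollary.
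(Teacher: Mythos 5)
Your overall route is the paper's own: keep the suffix tree at $O(W)$ nodes via standard sliding-window techniques~\cite{SufComp,FiGr}, transfer the $O(|P|)$ query bound unchanged, rerun the lemma~\ref{lem-balleaves} counting argument on a tree of size $O(W)$ so that each $\reprupd$ invocation recurses to depth $O(\log W)$, giving the dominating $O(N\log W)$ term, with the band stack contributing $O(N)$ for case~\ref{case-suffix}. That part is sound and matches the paper's proof sketch.

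The gap lies in what you defer as ``the hard part.'' The paper's essential point about deletions is not the $\repr$ pointers but the two \emph{auxiliary} data structures, which must themselves support constant-time deletions or the $O(W)$ space bound fails: (i) the ancestor/LCA structure used by $\mrmfind$ and $\reprupd$, for which the paper invokes the $O(1)$-time deletions of Dietz and Sleator~\cite{dietz1987two}; and (ii) Breslauer and Italiano's band trees~\cite{breslauer_sufext}, which were not designed for deletion at all, and for which the paper argues that the underlying dynamic nearest-marked-ancestor structures support leaf deletions at the same amortized cost via Westbrook's \emph{pmerge} operation~\cite{westbrook1992fast}. Your proposal keeps the suffix tree itself at $O(W)$ nodes but says nothing about evicting expired entries from these side structures, so as written the space bound you obtain is $O(N)$, not $O(W)$, and constant-time location of the update edge is not justified after nodes expire. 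Moreover, your lazy-invalidation idea for $\repr$ is not obviously correctness-preserving: property~\ref{prop-repr} guarantees a pointer only to the \emph{most recently} pos-updated node of each $\LT$ subtree, so if that node is deleted and its pointer silently treated as unset, the most recent \emph{surviving} updated node of that subtree may have no ancestor pointing to it, and $\mrmfind$ can then return a stale position; the guard in step~\ref{st-mrmfind-isancestor} checks only descendant-ness and does not catch this. You would need either an explicit repair of $\repr$ at deletion time or an argument that the affected positions can never be the answer to an in-window query; neither is supplied.
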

 

\begin{proof}[sketch]
  The suffix tree is augmented for indexing a sliding
  window using $O(W)$
  space with maintained time bound~\cite{SufComp,FiGr}. Deletion from the data
  structure for ancestor queries takes $O(1)$
  time~\cite{dietz1987two}. Node deletion
  from band trees takes
  $O(1)$ time using
  \emph{pmerge}~\cite{westbrook1992fast}. Hence, a $O(N\log W$) term
  obtained analogously to lemma~\ref{lem-balleaves}
  dominates. \fullproofappendix \qed
\end{proof}

\section{An Optimization for the Lempel-Ziv Case}\label{sec-lz}

While our data structure supports arbitrary most-recent-match queries, some related
work has considered only the queries that arise in Lempel-Ziv
factorization, i.e., querying $\STi{i}$ only for the longest match of $t_i\cdots t_N$. The desire for finding
the most recent occurrence of each factor is motivated by an improved
compression rate in a subsequent entropy coding pass.


Ferragina, Nitto, and Venturini~\cite{ferragina1} gave an $O(N)$ time algorithm
for this case, which is not online, and hence cannot be
applied to a sliding window. Crochemore, Langiu, and
Mignosi~\cite{CrochemoreRightmost} presented an online $O(N)$ time suffix tree
data structure that, under additional assumptions, circumvents the problem by
replacing queries for most recent match with queries for matches with lowest
possible entropy-code length.
%
%
An interesting question is whether the time complexity of our method can be
improved if we restrict queries to those necessary for Lempel-Ziv
factorization. We now sketch an augmentation for this case.

As characters of one Lempel-Ziv factor
are incorporated into $\ST$, we need not invoke $\reprupd$ for the
update edge in each iteration. Instead, we push each update edge on a
stack. After the whole factor has been incorporated, we pop edges and invoke $\reprupd$ for the reverse sequence, updating edge $e$
only if it would have increased $\pos(e)$. In other words, we ignore
any updates superseded by later updates during the same sequence of edge
pops.
In experiments we noted drastic reduction in recursive calls, but
whether worst case asymptotic time is reduced is an open question.
(Extended discussion in
appendix\citeappendix.\!)



\section{Conclusion}\label{sec-concl}

We have presented an efficient online method of maintaining most recent match
information in a suffix tree, to support optimal-time queries. The
question wheth\-er the logarithmic factor in the time complexity  of our method can be improved
upon is, however, still open. Furthermore, precise characteristics of application to restricted inputs
or applications (e.g.\@ Lempel-Ziv factorization) is subject to future research, as is
the practicality of the result for, e.g., data
compression use.

\arxivonly{
  \clearpage
  \section*{Appendix}

This appendix presents proofs (and extended proofs) omitted from the
main text, as well as some extended discussions and details.

\setcounter{lemma}{0}
\setcounter{theorem}{0}
\setcounter{corollary}{0}

\subsection*{A.1\hspace{1em}Lemma 1--5 with Full Proofs}

\begin{lemma}
  Let $e$ be the edge that represents $P$, and let the string corresponding to
  $e$'s endpoint be $PA$,
    $|A|\ge 0$. Precisely one of the following holds:
  \begin{enumerate}
  \item The position of the most recent occurrence of $P$ is also
    the position of the most recent occurrence of $PA$.
  \item There exists a suffix $PB$, $|B|\ge 0$ such that $|B|<|A|$.
  \end{enumerate}
\end{lemma}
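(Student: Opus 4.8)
The plan is to reduce both alternatives to a single dichotomy: whether or not the most recent occurrence of $P$ in $T$ is followed by the string $A$. Write $i_P$ for the starting position of the most recent occurrence of $P$, and $i_{PA}$ for that of $PA$. Since every occurrence of $PA$ is in particular an occurrence of $P$, the set of $PA$-positions is contained in the set of $P$-positions, so $i_{PA}\le i_P$ in all cases. The case $|A|=0$ is immediate: then $PA=P$, so case~\ref{case-unique} holds, while case~\ref{case-suffix} cannot (it would demand $|B|<0$). I would therefore assume $|A|\ge 1$ from here on.

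The crux is a structural property of $\ST$: because $|A|\ge 1$, the point representing $P$ lies strictly inside edge $e$, not at $e$'s endpoint, hence $P$ is not a branching node of $\ST$. Since $\ST$ represents the substrings of $T$ in the minimum number of nodes, any string that is followed by two distinct characters in $T$ must be an explicit branching node; therefore $P$ has a unique right-extension, namely the next character along $e$, which is the first character of $A$. Iterating this along the edge from $P$ to its endpoint $PA$ (each intermediate point is likewise non-branching), I would conclude that any occurrence of $P$ having at least $|A|$ characters after it in $T$ must be followed by exactly $A$, i.e.\@ is an occurrence of $PA$. Establishing this forced-extension property cleanly, including the boundary subtlety that an occurrence may simply run off the end of $T$ (the implicit-suffix situation the paper allows), is the step I expect to require the most care.

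With that property in hand the result follows quickly. Consider the most recent occurrence of $P$, at position $i_P$. If it is followed by $A$, then $PA$ occurs at $i_P$, so $i_{PA}\ge i_P$ and hence $i_{PA}=i_P$: case~\ref{case-unique}. Otherwise, by the forced-extension property this occurrence cannot have $|A|$ characters after it, so $N-i_P<|P|+|A|$; writing the suffix $t_{i_P}\cdots t_{N-1}$ as $PB$ with $B=t_{i_P+|P|}\cdots t_{N-1}$ yields a suffix $PB$ with $0\le|B|<|A|$: case~\ref{case-suffix}. To see the two cases are mutually exclusive, I would observe that a suffix $PB$ with $|B|<|A|$ occurs at position $N-|P|-|B|$, and any occurrence of $P$ at or after that position has fewer than $|A|$ trailing characters and so fails to extend to $PA$; since $i_P$ is at least that position, the most recent occurrence of $P$ does not extend, which is incompatible with case~\ref{case-unique}. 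Thus exactly one alternative holds.
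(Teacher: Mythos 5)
Your proof is correct and takes essentially the same route as the paper's: both exploit the absence of branching nodes strictly inside $e$ to force any sufficiently long right-extension of $P$ to agree with $A$, then split on whether the most recent occurrence of $P$ has $|A|$ characters after it (giving case~1) or runs off the end of $T$ (giving a suffix $PB$ with $|B|<|A|$, case~2), with exclusivity following from position arithmetic. Your write-up is merely more explicit than the paper's terse version, e.g.\@ in handling $|A|=0$ and spelling out the mutual-exclusivity step.
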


\begin{proof}
  Since there is no branching node between $P$ and $PA$, we know that for any
  substring $PC$, $C$ and $A$ must match in the first $\min\{|A|,|C|\}$
  characters. If $|C|\ge|A|$, any occurrence of $PC$ (including
  the most recent one) is an occurrence of $PA$, and
  case~\ref{case-unique} holds. If $|C|<|A|$, then, since
  there is no branching node between $PC$ and $PA$, $PC$ is a prefix of a
  string $B$ that corresponds to an implicit suffix node on $e$, and we have
  case~\ref{case-suffix}. Clearly, $PB$ occurs more recently than $PA$, since
  $PB$ is a suffix and $|PB|<|PA|$.\qed
\end{proof}

\begin{lemma}
  For an internal edge $e$, let $A$ be the string corresponding to $e$'s
  endpoint, and $t_{i-|A|}\cdots t_{i-1}$ the most recent occurrence of $A$ in
  $T$. Then $e$ has a descendent $f$ in $\LT$ whose endpoint corresponds to
  $BA$ for some string $B$, and
  $\pos(f)=i-|B|-|A|$.
\end{lemma}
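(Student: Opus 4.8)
The plan is to argue that the most recent occurrence of $A$ corresponds to some leaf (or suffix node) in the suffix tree, and that the $\LT$ node $f$ whose endpoint corresponds to that occurrence's full context is a descendant of $e$ in the link tree. First I would observe that $A$ occurs at position $i-|A|$, meaning $t_{i-|A|}\cdots t_{i-1}=A$. Since this is an occurrence inside $T$, the string $t_{i-|A|}\cdots t_{i-1}$ is a substring of $T$, and by maximality of $i$ (most recent occurrence) I want to identify a string $BA$ ending at position $i-1$ that is represented by an $\ST$ edge $f$ whose endpoint is exactly $BA$, where $B=t_{i-|A|-|B|}\cdots t_{i-|A|-1}$ is the longest left-extension such that $BA$ still marks a genuine edge endpoint. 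The natural candidate is to take $BA$ to be the longest prefix of the suffix $t_{i-|A|-|B|}\cdots t_{i-1}$ whose endpoint coincides with an explicit node.

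The crux is to establish the two required facts: (i) that this $f$ is a descendant of $e$ in $\LT$, and (ii) that $\pos(f)=i-|B|-|A|$. For (i), I would use the defining property of the edge-oriented suffix links: $\suf$ repeatedly strips the leading character, so following suffix links from $f$ (whose endpoint is $BA$) eventually reaches the edge whose endpoint is $A$, namely $e$, provided each intermediate left-extension continues to mark an edge. Since $A$ is represented by $e$ and $B$ is a nonempty-or-empty prefix stripped one character at a time, the chain of suffix links from $f$ passes through $e$, making $f$ a descendant of $e$ in $\LT$. I would need to confirm that $e$ itself lies on this chain — i.e.\ that $A$ marks an edge, which holds because $e$ is internal and represents $A$ at its endpoint.

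For (ii), I would argue that because $t_{i-|A|}\cdots t_{i-1}$ is the \emph{most recent} occurrence of $A$, the position $\pos(f)$ must have been set to reflect the suffix $t_{i-|B|-|A|}\cdots t_{i-1}$ ending at $i-1$. The endpoint of $f$ corresponds to $BA$, whose most recent occurrence ends at position $i-1$ (any later occurrence of $BA$ would force a later occurrence of its suffix $A$, contradicting maximality of $i$). Since $f$'s endpoint is a suffix of $T$ at the relevant iteration, $\pos(f)=i-\length(f)=i-|BA|=i-|B|-|A|$, as claimed.

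The main obstacle I expect is carefully justifying the existence and the suffix-link-chain membership in step (i): I must ensure that every intermediate string along the path from $BA$ down to $A$ genuinely marks an $\ST$ edge (so that the $\LT$ ancestor relation holds without gaps), and that the choice of $B$ as a maximal left-extension is consistent with $f$ being an actual internal edge rather than a point interior to an edge. Handling the boundary case $|B|=0$ (where $f=e$ directly) and reconciling the ``most recent occurrence'' assumption with how $\pos$ values are maintained under the update strategy will require the most care.
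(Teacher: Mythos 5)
Your overall skeleton matches the paper's---exhibit an edge $f$ whose endpoint is a left-extension $BA$ of $A$, argue $\pos(f)=i-|B|-|A|$, and argue $f$ is an $\LT$-descendant of $e$---and your persistence observation (a later occurrence of $BA$ would force a later occurrence of $A$) is exactly right. But there are two genuine gaps. The more serious one is in your step (ii): you conclude $\pos(f)=i-\length(f)$ ``since $f$'s endpoint is a suffix of $T$ at the relevant iteration.'' That inference is valid only under the naive rule of section~\ref{sec-naive}, which the algorithm explicitly abandons: under the actual strategy of section~\ref{sec-updstrat}, only \emph{one} edge per iteration---the update edge, i.e.\ the maximum-$\ltdepth$ internal edge whose endpoint corresponds to an active suffix---is pos-updated, and most edges whose endpoints are suffixes at iteration $i$ are never touched. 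The paper's proof therefore runs in the opposite direction: because $A$'s most recent occurrence completes in iteration $i$, the update edge of iteration $i$ must itself be an edge $f$ whose endpoint has the form $BA$, and that iteration sets $\pos(f)=i-|BA|$. Your $f$ (the longest left-extension of $A$ sitting at an explicit node) is intended to coincide with this update edge, but you never identify it as such, never use that it is internal and of maximum $\ltdepth$, and hence never establish that the algorithm ever assigns $\pos(f)$ the claimed value. As written, your argument proves the lemma for the naive scheme---for which the lemma is not needed, since the whole point of the lemma is to justify the sparse update strategy.

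The second gap is the one you yourself flag as the ``main obstacle'': your $\LT$-ancestry argument in step (i) reasons about endpoints, but $\suf$ is defined via \emph{marks} (the shortest string represented by an edge). Following suffix links from $f$ visits edges marked by successively shorter suffixes of $f$'s mark, so $e$ lies on this chain if and only if the mark of $e$ is a suffix of the mark of $f$; nothing in your argument rules out the chain bypassing $e$. (Your specific worry---that some intermediate left-extension might fail to mark an edge---is actually harmless: marks are closed under removal of leading characters, and the definition of $\suf$ skips such gaps by construction.) The paper closes the real gap by decomposing $A=A_L a A_R$ where $A_L a$ marks $e$, observing that by the definition of $\LT$ any edge representing a string of the form $BA_L a$ is a descendant of $e$, and then using the absence of branching nodes between $A_L a$ and $A$ (so $A_L a$, and hence $BA_L a$, is always followed by $A_R$) to conclude that $BA_L a$ and $BA$ lie on the same edge $f$. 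This mark-versus-endpoint step is the missing idea; without it, step (i) does not go through.
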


\begin{proof}
  $A$'s most recent occurrence appeared in iteration $i$. The
  update edge in iteration $i$ must consequently be an edge $f$ whose endpoint
  is $BA$ for some $B$, and the iteration updates $\pos(f)$ to $i-|BA| =
  i-|B|-|A|$.

 We now show that $e$ is an ancestor of $f$ in $\LT$.
  Let $A=A_L a A_R$ such that $A_L a$ marks $e$. By the definition of $\LT$, any $BA_L a$ is represented by a descendent of $e$. Since
  there is no branching node between $A_L a$ and $A_L a A_R$, $A_L a$ is never
  followed by a string different from $A_R$, and hence neither is $B A_L
  a$. Consequently, $BA_L a$ and $BA$ are both represented by $f$.\qed
\end{proof}

\begin{lemma}
  Execution time of $\mrmfind(e)$, where $e$ represents a string $P$ can be
  bounded by $O(|P|)$.
\end{lemma}
\begin{proof}
  Let $Q$ be the string that marks $e$. We have $|Q|\le|P|$. Traversing the
  path from $e$ to the root via suffix links takes $|Q|$ steps,
  since following a suffix link implies navigating to the position of a shorter
  string. The ancestor query in step~\ref{st-mrmfind-isancestor} can be
  supported in $O(1)$ time~\cite{cole2005dynamic}, and all other operations in
  $\mrmfind$ are trivially constant-time.\qed
\end{proof}

\begin{lemma}
  Given a sequence $V=e_1,\ldots, e_N$ of nodes to be updated in a tree $\TT$ with $M$
  nodes, there exists a tree $\TT'$ with at most $2N$ nodes, such that the
  depths of any two leaves in $\TT'$ differ by at most one, and a sequence of
  $\TT'$ nodes
  $V'=e'_1,\ldots, e'_N$, such that invoking $\reprupd(e', \mathit{root}(\TT'))$ for each
  $e'\in V'$ results in at least as many recursive $\reprupd$ calls as invoking
  $\reprupd(e, \mathit{root}(T))$ for each $e\in V$.
\end{lemma}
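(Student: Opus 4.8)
The plan is to exhibit $\TT'$ and $V'$ by applying two reshaping steps to $(\TT,V)$, each of which provably does not decrease the number of recursive $\reprupd$ calls: first forcing every update to target a leaf, then turning the tree into a nearly-complete binary tree on at most $2N$ nodes. Granting the reduction, the advertised bound is immediate, since each recursive step of $\reprupd$ writes a $\repr$ value strictly deeper than the previous one, so a single top-level call makes at most $\mathit{height}(\TT')+1$ calls; with leaf depths differing by at most one and at most $2N$ nodes the height is $O(\log N)$, giving $O(N\log N)$ calls in total. Thus here I only need the two monotonicity claims.

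For the first step I would replace each internal $e_k\in V$ by a fixed leaf $\lambda(e_k)$ in its subtree, using the same leaf on every recurrence of $e_k$. The decisive observation is that a cascade is truncated only in case~d, i.e.\ when the displaced value $f$ is a proper ancestor of the incoming value; and distinct leaves are pairwise incomparable, so once every stored $\repr$ value is a leaf, case~d can fire only in its degenerate form case~e (incoming value equal to $f$). Internal targets are therefore exactly what lets a cascade stop early, and deleting them can only lengthen cascades. I would make this precise by running the original and the leaf-substituted executions in lockstep and proving, by induction on the update index, that substituting $\lambda$ into every stored value carries one configuration to the other while the leaf execution performs at least as many calls at each update; the induction should go through because $\lambda(e_k)$ lies in $e_k$'s subtree, so all lowest-common-ancestor relations used by the algorithm are preserved at the levels where they matter.

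For the second step I would first binarise: splitting each node of out-degree greater than two into a chain of binary nodes preserves every leaf-to-leaf ancestor relation and can only push lowest common ancestors downward, which lengthens cascades, so the count does not drop. It then remains to equalise leaf depths within the $2N$-node budget, and this is the crux. When the binarised tree already has height $O(\log N)$ I would deepen the short root-to-leaf paths until all leaves sit at a common depth, forming a nearly-complete binary tree of at most $2N$ nodes; pushing leaves deeper only gives cascades more room and hence cannot decrease the count. The genuine obstacle is the opposite regime---a tall, path-like tree, where equalising depths forces some leaves to become shallower and a naive reshaping could lose calls. The main work of the proof is to rule this out: I expect to argue that in such a tree long cascades cannot recur, since re-creating the staircase $\repr$ configuration that a length-$k$ cascade consumes itself costs $\Omega(k)$ updates, so the total stays $O(N)$ and is dominated by what a canonical nearly-complete tree attains. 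Turning this amortised statement into the exact domination the lemma demands---ideally via an exchange argument that rebalances the two depth-extremal leaves one step toward each other without decreasing the optimal call count---is the step I expect to be hardest.
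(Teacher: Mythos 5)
The decisive gap is the one you flag yourself: the balancing step in the tall-tree regime. The lemma requires exhibiting, for the given instance, a balanced tree and \emph{some} sequence $V'$ whose execution makes at least as many recursive calls; what you offer there is an amortized claim (long cascades cannot recur because rebuilding the staircase of $\repr$ values costs $\Omega(k)$ updates) plus a hoped-for exchange argument, neither carried out. The paper closes this step without any execution-preserving simulation. Having first made all updates leaves, it observes that a recursive call propagates past $g$ iff the incoming value $e$ and the displaced value $f$ lie in \emph{different} subtrees of $g$, so the number of ways of choosing such a pair at $g$ is a product of binomial-type terms depending only on how the leaves are distributed among $g$'s subtrees, and this count is maximized when the distribution is as even as possible. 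It then moves nodes between subtrees to even out leaf counts (changing neither the number of leaves nor of internal nodes), and---because the lemma allows $V'$ to be chosen freely rather than forced to mirror $V$---picks a sequence on the balanced tree realizing the maximized count, hence at least the original count. Your difficulty is largely an artifact of insisting that $V'$ simulate $V$ in lockstep; once the sequence may be re-chosen on the reshaped tree, the tall-tree obstruction dissolves. (Note also that the paper's $\TT'$ is not binary; binarisation is an unnecessary detour, and your claim that it can only lengthen cascades would itself need proof, since it perturbs the stored configurations.)

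Your first step is also unsound as stated, because you fix one leaf $\lambda(e)$ per node and reuse it for every occurrence and every stored value. Take a tree whose root has child $e$, with single child $f$, with single leaf child $\ell$, and $V=(f,e)$. The original execution makes $1+2=3$ calls (the update of $e$ displaces $f$, which is a descendant of $e$, i.e.\ case~c, and the cascade continues one level). Your substitution is forced to set $\lambda(e)=\lambda(f)=\ell$, so $V'=(\ell,\ell)$, and the second update now finds the stored value equal to the incoming one: case~e fires and the execution makes only $1+1=2$ calls, falsifying your per-update domination invariant. The paper avoids exactly this by choosing the replacement leaf \emph{per occurrence}, with restrictions keyed to the cases of $\reprupd$ (in case~c the new leaf must lie in a subtree of $g$ other than the one containing $f$, symmetrically in case~d), and by \emph{adding a fresh leaf below $g$ when $g$ has only one child}---which is also precisely where the ``at most $2N$ nodes'' bound comes from, a point your construction does not account for. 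Your observation that distinct leaves are incomparable (so case~d degenerates to case~e) is correct and in the same spirit as the paper's step, but without per-occurrence choice and the new-leaf device the claimed induction fails.
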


\begin{proof}
  We start from $\TT$ and $V$, modifying them in a series of steps. The end
  result of all adjustments is $\TT'$ and $V'$.

  First, we make $V$ contain only leaves. Consider $e = e_j\in V$, and let $f$ and $g$ be defined in relation to
  $e$ according to $\reprupd$. Replace $e$ in $V$ with some leaf $e'$
  whose ancestor is $e$, with the following restriction: If $g=e\neq f$
  (case~c), $e'$ must be in a subtree of $g$ other than that which contains $f$.
  Reversely, if $g=f\neq e$ (case~d), $e'$ is in a subtree of $g$ other
  than that which contains $e$. In either case, if $g$ has only one
  child, we add $e'$ as a new leaf below $g$. Thereby, we ensure that the
  transformation does not reduce the number of recursive calls, and
  contributes at most $N$ nodes.

  Next, we show that the tree can be balanced, by the following
  argument. Consider $e$, $f$, and $g$ as defined in $\reprupd$. Given the
  precious transformation of $V$, we can assume that $e$ and $f$ are
  leaves. Recursion in $\reprupd$ progresses iff $e$ and $f$ are in different
  subtrees of $g$. Let $k$ be the number of children of $g$, $m_i$ the number
  of leaves in $g$'s $i$th subtree, and $m_g$ the total number of leaves in the
  subtree rooted at $g$. The number of possibilities for choosing $e$ and $f$
  is $\prod_{i=1}^k\binom{m_g}{m_i}$, which is maximized if the number of
  leaves is as evenly distributed as possible among the subtrees of $g$. We
  move nodes between subtrees to even out the number of leaves, without
  changing the number of leaves or internal nodes. Applying for
  all internal nodes yields a balanced tree, where the depth of leaves differ
  by at most one. We are not restricted in choosing nodes for the modified
  update sequence in any other way, and can make use of the full choice made
  possible by the restructuring in order to make sure that we do not reduce the
  number of recursive calls. Hence, for some sequence, the number of recursive
  calls is at least the same, which concludes the proof.\qed
\end{proof}

Lemma~\ref{lem-mrmupd-nlogn}, leading up to theorem~\ref{th-allnlogn},
is completely omitted from the main text, and presented only in this appendix.

\begin{lemma}\label{lem-mrmupd-nlogn}
  The time for maintaining $\repr(e)$ for each $e$ so as to maintain
  property~\ref{prop-repr} during construction of a suffix tree over a string
  of length $N$ can be bounded by $O(N\log N)$.
\end{lemma}
\begin{proof}
  By lemma~\ref{lem-balleaves}, the total number of recursive calls in
  invoking $\reprupd$ is proportional to the maximum for a balanced tree, whose
  height is $O(\log N)$, when the number of nodes is linear in $N$. The time
  for each recursive call is constant, when a data structure for constant-time
  lowest common ancestor queries is
  employed.~\cite{cole2005dynamic}. Consequently, total time is at most
  $O(N\log N).$\qed
\end{proof}

Note that locating the update edge, discussed
in section~\ref{sec-bresl-updpoint}, is not included in the time
accounted for by lemma~\ref{lem-mrmupd-nlogn}.

\subsection*{A.2\hspace{1em}Description of Band Trees and Full Theorem
  and Corollary}

Breslauer and Italiano~\cite{breslauer_sufext} describe
augmentations of Ukkonen's algorithm by which implicit suffix nodes can be
maintained for amortized constant-time access, while maintaining linear suffix
tree construction time. Implicit suffix nodes on external edges
has cyclicity properties that can be used for computing their positions
without any extra storage. Implicit nodes on internal edges are maintained
through the use of a stack of \emph{bands}, where a band is a tree whose nodes
map to $\ST$ edges with equal edge labels, and whose edges correspond to suffix
links (i.e., it is a part of $\LT$). Breslauer and Italiano show that the band
stack, and an implicit suffix node position for one representative of each
band, can be maintained in amortized $O(1)$ time per $\ST$ update iteration,
and support $O(1)$ time implicit-node queries.

\begin{theorem}
  A suffix tree with support for locating, in an input stream, the most recent
  longest match of an arbitrary pattern $P$ in $O(|P|)$ time, can be
  constructed online in time $O(N\log N)$ using $O(N)$ space, where $N$ is the
  current number of processed characters.
\end{theorem}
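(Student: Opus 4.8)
The plan is to assemble the theorem from three essentially independent claims — correctness of the query, its $O(|P|)$ running time, and the $O(N\log N)$-time, $O(N)$-space online construction — dispatching the first two along the two cases of Lemma~\ref{lem-cases}. First I would pin down the query procedure: given a pattern, a standard top-down scan locates in $O(|P|)$ time the edge $e$ representing the longest prefix $P$ that is a substring of $T$. Using the band-tree machinery of Breslauer and Italiano I would then test, in $O(1)$ time, whether an implicit suffix node lies strictly between the point for $P$ and the endpoint string $PA$ of $e$; by Lemma~\ref{lem-cases} this test distinguishes the two cases exactly, since precisely one of them holds.

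In case~\ref{case-suffix} the shortest such implicit suffix node is itself a suffix $PB$, hence occurs more recently than any non-suffix occurrence, and its position is read directly off the band stack in $O(1)$ time; the query cost is therefore dominated by the $O(|P|)$ locating step. In case~\ref{case-unique} the most recent position of $P$ coincides with that of the endpoint string $PA$, so I would invoke $\mrmfind(e)$: its correctness follows from Lemma~\ref{lem-find-corr}, which guarantees that the true most-recent occurrence is among the positions $\mrmfind$ inspects (and it returns their maximum), while its running time is $O(|P|)$ by Lemma~\ref{lem-find-time}. Combining the cases yields correct output and $O(|P|)$ query time in all instances.

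For the construction bound I would decompose each online iteration into the base suffix-tree update and the maintenance of Property~\ref{prop-repr}. The base construction — Ukkonen's algorithm augmented with the band stack, which additionally supplies the current update edge in amortized $O(1)$ time per iteration (Section~\ref{sec-bresl-updpoint}) — runs in $O(N)$ total time and $O(N)$ space. The only extra work is the $\reprupd$ invocations. By Lemma~\ref{lem-balleaves}, the total number of recursive $\reprupd$ calls over all $N$ iterations is at most the count incurred on a suitable balanced tree of $O(N)$ nodes; since such a tree has height $O(\log N)$ and each recursion strictly descends one level, every top-level invocation triggers $O(\log N)$ calls, for an aggregate of $O(N\log N)$. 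Each call costs $O(1)$ using a dynamic constant-time lowest-common-ancestor structure~\cite{cole2005dynamic}, and the auxiliary structures ($\LT$, the band trees, and the LCA index) together occupy $O(N)$ space.

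The step I expect to demand the most care is verifying that these analyses compose without silently breaking the amortized accounting on which two of them rest. The base construction and the $\reprupd$ maintenance are each amortized arguments, so I must confirm that locating the update edge, the LCA and reverse-suffix-link lookups inside $\reprupd$, and the band-stack queries performed during a search all stay within their stated amortized or worst-case budgets, so that the total is genuinely $O(N\log N)$ time and $O(N)$ space and not merely the optimistic sum of per-component best cases.
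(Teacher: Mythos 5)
Your proposal is correct and follows essentially the same route as the paper's own proof: the same case split via Lemma~\ref{lem-cases}, query correctness and $O(|P|)$ time from Lemmas~\ref{lem-find-corr} and~\ref{lem-find-time}, the $O(N\log N)$ maintenance bound obtained from Lemma~\ref{lem-balleaves} together with the $O(\log N)$ height of a balanced tree and constant-time LCA queries~\cite{cole2005dynamic}, and the Breslauer--Italiano band stack both for locating the update edge in $O(1)$ time and for the $O(N)$ handling of case~\ref{case-suffix}. Your added remarks (the $O(1)$ test distinguishing the two cases, and the caution about composing amortized bounds) are elaborations consistent with, not departures from, the paper's argument.
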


\begin{proof}
  For case~\ref{case-unique} in lemma~\ref{lem-cases}, query correctness and
  $|P|$ time bound under maintenance of property~\ref{prop-repr} for
  pos-updating the update edge at each iteration, follow from
  lemmas~\ref{lem-find-corr} and~\ref{lem-find-time}. The method for maintaining
  property~\ref{prop-repr} is given in section~\ref{sec-reprupd}, and its $O(N\log
  N)$ time bound given by
  lemma~\ref{lem-mrmupd-nlogn}. Locating the update
  edge for pos-updating takes constant time, using the described data structure
  of Breslauer and Italiano, which also provides $O(N)$ maintenance time for
  case~\ref{case-suffix} in lemma~\ref{lem-cases}. The space usage of all
  described data structures is bounded by $O(N)$.\qed
\end{proof}

\begin{corollary}
  A suffix tree with support for locating, among the most recent $W$ characters
  of an input stream, the most recent
  longest match of an arbitrary pattern $P$ in $O(|P|)$ time, can be
  constructed online in time $O(N\log W)$ using $O(W)$ space, where $N$ is the
  current number of processed characters.
\end{corollary}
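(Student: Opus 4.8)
The plan is to reduce the corollary to theorem~\ref{th-allnlogn} by swapping each data structure used there for a sliding-window counterpart, and then re-running the amortized analysis with $W$ in place of $N$ wherever the size of a structure enters the bound. First I would replace the underlying suffix tree with a sliding-window suffix tree~\cite{FiGr,SufComp}, which indexes only the most recent $W$ characters, occupies $O(W)$ space (at most $2W+1$ nodes), and supports both the incorporation of a new character and the deletion of the oldest suffix within Ukkonen's per-character time bound. Crucially, this maintenance already renews the indirect edge-label references (the $\pos$ values) so that every edge present in the tree points into the current window; I would lean on this to guarantee that no position ever reported by $\mrmfind$ falls outside the window.

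For the insertion side of the time bound I would re-examine the counting argument of lemma~\ref{lem-balleaves} and lemma~\ref{lem-mrmupd-nlogn}. Since the link tree $\LT$ now holds $O(W)$ nodes at all times, the balanced-tree bound caps the number of recursive $\reprupd$ calls per invocation at $O(\log W)$ rather than $O(\log N)$. Summed over the $N$ characters of the stream this contributes the dominant $O(N\log W)$ term that I expect to claim.

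It remains to account for deletions, and this is where I would spend most care. Each time the window advances, a leaf (and possibly its merged parent) leaves both $\ST$ and $\LT$, and the auxiliary structures must follow suit in $O(1)$ time so as not to inflate the bound. I would argue that (i) the dynamic ancestor structure supports constant-time deletion via order-maintenance labelling~\cite{dietz1987two}, extending the static query structure of~\cite{cole2005dynamic}; (ii) a node can be removed from a band tree in $O(1)$ time using \emph{pmerge}~\cite{westbrook1992fast}, preserving Breslauer and Italiano's~\cite{breslauer_sufext} amortized constant-time access to the update edge; and (iii) the $\repr$ pointers can be repaired locally, since when a deleted node is the target of some $\repr(g)$ that pointer is simply reset toward $\lttop$, which cannot violate property~\ref{prop-repr} because the vanished node corresponded to an occurrence that has now fallen out of the window and can no longer be the most recently pos-updated descendant of any surviving node.

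The hard part, as I see it, is pinning down step (iii) rigorously: I must show that restoring property~\ref{prop-repr} after a deletion requires only $O(1)$ pointer fixes and never triggers a fresh cascade of $\reprupd$ calls. The leverage is that deletions only ever remove the \emph{oldest} position, so they can only shrink the candidate set inspected by $\mrmfind$; once I confirm that a deletion never leaves some subtree of a surviving node without a valid $\repr$ ancestor, the deletion cost folds into the per-update budget and the $O(N\log W)$ term obtained analogously to lemma~\ref{lem-balleaves} dominates, yielding the stated bound within $O(W)$ space.
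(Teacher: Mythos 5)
Your proposal follows essentially the same route as the paper's own proof: swap in a sliding-window suffix tree~\cite{SufComp,FiGr}, handle deletions in the ancestor-query structure in $O(1)$ time~\cite{dietz1987two} and in the band trees via \emph{pmerge}~\cite{westbrook1992fast}, and rerun the counting argument of lemma~\ref{lem-balleaves} on an $O(W)$-node tree to obtain the dominating $O(N\log W)$ term. Your point (iii) about repairing $\repr$ pointers after window deletions is a subtlety the paper's proof does not address at all, so flagging it (and sketching a fix) only adds care beyond what the paper itself provides.
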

 

\begin{proof}
  Our augmentations of the suffix tree in itself do not alter its structure,
  and consequently, existing techniques for augmenting the suffix tree algorithm for
  to index a sliding window in $O(1)$ amortized time, limiting space usage to
  $O(W)$~\cite{SufComp,FiGr} are directly applicable. The additional
  data structures are:
  \begin{itemize}
  \item The data structure for ancestor queries used in $\mrmfind$ in
    section~\ref{sec-mrmfind}. Deletions in $O(1)$ time are
    available~\cite{dietz1987two}, which can keep the space usage down to the
    $O(W)$ tree size.
  \item The \emph{band trees} kept on a stack in order to be able to find the
    update edge in constant time. Breslauer and
    Italiano~\cite{breslauer_sufext} do not discuss deleting nodes from the
    band trees, but we note that the data structures for dynamic nearest marked
    ancestors they use for achieving $O(1)$ amortized time operations do also
    support leaf deletions with the same time bound by means of a \emph{pmerge}
    operation~\cite{westbrook1992fast}, again allowing space usage to be
    asymptotically bounded by the $O(W)$ tree size.
  \end{itemize}

  Analogously to the proof of lemma~\ref{lem-balleaves}, the number of
  recursive $\reprupd$ calls is at most proportional to $N$ times the height of
  a perfectly balanced tree. For the sliding window suffix tree of size
  $O(W)$, this contributes a dominating term of $O(N\log W$) to the time
  complexity.\qed
\end{proof}

\subsection*{A.3\hspace{1em}Discussion of Worst Case for General
  Case and Lempel-Ziv Optimization}

Lemma~\ref{lem-mrmupd-nlogn} does not state that any input exists that results
in $\mathrm{\Omega}(N\log N)$ recursive calls, but such an adversarial input
\emph{does} exist, and hence our analysis is tight. We now give an
informal elaboration on the nature of an adversarial input. (Since our main results
do not depend on the lower bound, we do not provide a formal argument
to support the existance of this input.)

With specified parameter $d$,
an adversary can choose symbol $t_i$ considering the most recent
previous occurrence of a string $Aa$, where $A=t_{i-d}\cdots t_{i-1}$, and let
$t_i\neq a$. This produces a pair of edge updates that reaches depth $d$ in
$\LT$. The resulting adversarial string is a sequence with cycle
length $2^d$, and the number of times $\reprupd$ reaches recursion
depth $d$ approaches half of the iterations. With $N=c 2^d$ for constant $c$, $d$
is $\Theta(\log N)$.

$\log N$ is close to $N/2$. For $d=2$, one such
sequence has cycle $abaaabbb$; for $d=3$, the corresponding cycle is
$aaaabaabbababbbb$.

In experiments, we have
observed the worst case behavior for constructed adversarial inputs
only, and neither for naturally occurring data nor random inputs.

Furthermore, we note that optimization in
section~\ref{sec-lz} yields $O(N)$ time for the Lempel-Ziv special
case for the given adversarial input, as well as for any other
string exhibiting a cycle of constant length. 
%
%
We have not found any adversarial input that produces $\mathrm{\Omega}(N\log N)$
time in this case, and as noted in section~\ref{sec-lz}, it is an open
question whether it achieves total $o(N\log N)$ time.
}

\end{document}